\theoremstyle{plain}
\newtheorem{thm}{Theorem}
\newtheorem*{thm*}{Theorem}
\newtheorem{lemma}{Lemma}
\theoremstyle{definition}
\newtheorem*{rmk*}{Remark}
\newtheorem*{prpt*}{Property}
\begin{document}
\title{Symmetry and Topology of Successive Quantum Feedback Control}
\author{Junxuan Wen}
\affiliation{Department of Applied Physics, The University of Tokyo, 7-3-1 Hongo, Bunkyo-ku, Tokyo 113-8656, Japan}
\author{Zongping Gong}
\affiliation{Department of Applied Physics, The University of Tokyo, 7-3-1 Hongo, Bunkyo-ku, Tokyo 113-8656, Japan}
\author{Takahiro Sagawa}
\affiliation{Department of Applied Physics, The University of Tokyo, 7-3-1 Hongo, Bunkyo-ku, Tokyo 113-8656, Japan}
\affiliation{Quantum-Phase Electronics Center (QPEC), The University of Tokyo,
7-3-1 Hongo, Bunkyo-ku, Tokyo 113-8656, Japan}
\begin{abstract}
    We establish a symmetry classification 
    for a general class of quantum feedback control. For successive feedback control with a non-adaptive sequence of bare measurements (i.e., with positive Kraus operators), we prove that the symmetry classification collapses to the ten-fold AZ${}^\dagger$ classes, specifying the allowed topology of CPTP maps associated with feedback control. We demonstrate that a chiral Maxwell's demon with Gaussian measurement errors exhibits quantized winding numbers. Moreover, for general (non-bare) measurements, we explicitly construct a protocol that falls outside the ten-fold classification. These results broaden and clarify the principles in engineering topological aspects of quantum control robust against disorder and imperfections.
\end{abstract}
\maketitle

\textit{Introduction.---}
Non-Hermitian topology has reshaped our understanding of phases and robustness in various open and dissipative systems \cite{PhysRevX.8.031079,Kawabata_2019,Ashida_2020,Okuma_2023,Bergholtz2021,PhysRevLett.120.146402,PhysRevLett.121.026808,Yao2018,PhysRevB.99.235112}, including active matter \cite{Sone2020NatComm}, ultracold atoms \cite{Ashida2017,Li2019}, 
electronic circuits \cite{Schindler2011,Ezawa2019}, and classical stochastic systems \cite{Murugan2017,Sawada_2024,sawada2024bulkboundarycorrespondenceergodicnonergodic}.
Based on the 38-fold Bernard-LeClair (BL) extension of the conventional ten-fold Altland-Zirnbauer (AZ) symmetry classification \cite{Bernard_2002}, general non-Hermitian phases in arbitrary spatial dimensions and symmetries have been established \cite{Kawabata_2019,PhysRevB.99.235112}. 
It is worth noting that the notions of non-Hermitian topology should naturally apply to dynamical control in open quantum systems \cite{Nakagawa_2025}.

Meanwhile, quantum feedback control has attracted intensive interest from several perspectives. Continuous measurement and feedback \cite{Wiseman_Milburn_2009} has long been central, as it enables state stabilization \cite{Sayrin2011Nature,Vijay2012,CampagneIbarcq2013PRX,PhysRevLett.70.548,Dotsenko2009PRA}, noise suppression \cite{Pinard2000PRA,Cox2016PRL,Wiseman1999Job,Tsang2010PRL}, feedback cooling \cite{Cohadon1999PRL,Poggio2007PRL,PhysRevLett.70.548,5lp2-9sps,kumasaki2025thermodynamicapproachquantumcooling}, and error correction \cite{Ofek2016Nature,NVCenter2016NatComm,Google2023Nature,Ahn2002PRA,vanHandel2005,Denhez_2012}. Moreover, quantum feedback control has intrinsic connections to quantum thermodynamics as illustrated by Maxwell's demon, which clarifies the role of information in thermodynamics both theoretically \cite{PhysRevLett.100.080403,PhysRevE.88.052121,PhysRevLett.122.150603,PhysRevLett.128.170601,kumasaki2025thermodynamicapproachquantumcooling} and experimentally \cite{PhysRevLett.117.240502,Cottet2017PNAS,Masuyama_2018,PhysRevLett.121.030604,PRXQuantum.3.020329,5lp2-9sps}.
However, the topological aspects of quantum feedback control have rarely been studied. A first step toward such a framework for topological quantum feedback control has been presented in Ref. \cite{Nakagawa_2025}, while the results of symmetry classification therein are limited to the case where only a single projective measurement without
error is executable. 

In this Letter, we provide a general symmetry classification of successive feedback control. We prove a no-go theorem stating that, despite the greatly larger protocol space 
compared to the single projective measurement case,
the admissible BL classes again collapse to the AZ${^\dagger}$ classes. Our central assumption is that each measurement is bare \cite{Jacobs_2005}, meaning that all Kraus operators are positive. As a consequence, some symmetries that might appear available in generic non-Hermitian evolutions are ruled out. We illustrate the topology of a chiral Maxwell's demon subject to Gaussian measurement errors, for which we observe quantized winding numbers. Since any POVM admits a realization via a bare measurement, our result broadly applies and provides design principles for robust quantum feedback protocols based on topology and symmetry.

Furthermore, we go beyond the above no-go theorem by dropping the condition that measurements are bare; allowing the unitary parts in the measurements, we explicitly construct a protocol whose symmetry class falls outside AZ${^\dagger}$. This suggests that bareness is the assumption that draws the precise boundary inside which the classification collapses to the AZ${^\dagger}$ classes.

\textit{Setup.---}
We consider quantum feedback control composed of measurement and feedback. In general, the measurement is described by Kraus operators $\{ \hat{M}_m \}$ satisfying $\sum_m \hat{M}_m^\dagger \hat{M}_m = \hat{I}$, and the feedback is described by unitary operators conditioned on the outcomes of preceding measurements. In our setup (see also Fig. 1), the CPTP map for the entire process is written as
\begin{widetext}
    \begin{equation}\label{general form of feedback control}
    \begin{aligned}
        \mathcal{E}_{\tau_1,\cdots,\tau_N}(\hat{\rho}) := 
        \sum_{m_1,\cdots,m_N}^{}\hat{K}^{(N)}_{m_1,\cdots,m_N}(\tau_N)\cdots\hat{K}^{(2)}_{m_1,m_2}(\tau_2)\hat{K}^{(1)}_{m_1}(\tau_1) \hat{\rho} \hat{K}^{(1)\dag}_{m_1}(\tau_1)\hat{K}^{(2)\dag}_{m_1,m_2}(\tau_2)\cdots\hat{K}^{(N)\dag}_{m_1,\cdots,m_N}(\tau_N),
    \end{aligned}
\end{equation}
\end{widetext}
\begin{equation}
    \hat{K}^{(i)}_{m_1,\cdots,m_i}(\tau_i) := \hat{U}^{(i)}_{m_1,\cdots,m_i}(\tau_i)\hat{M}^{(i)}_{m_i}, 
\end{equation}
where $\left\{ \hat{M}_{m_i}^{(i)} \right\}$ is the set of measurement operators at step $i$, and $\hat{U}_{m_1,\cdots,m_i}^{(i)}(\tau_i)$ is the feedback unitary conditioned on outcomes $m_1,\cdots,m_i$ with feedback duration $\tau_i$.

In general, the POVM $\{ \hat E_m \}$ associated with the Kraus operators $\{ \hat M_m \}$ is given by $\hat E_m = \hat M_m^\dagger \hat M_m$.  Then, each Kraus operator has the polar decomposition $\hat M_m = \hat V_m \sqrt{\hat E_m}$ with $\hat V_m$ being some unitary.  There is a certain arbitrariness regarding whether this unitary $\hat V_m$ is regarded as a part of measurement or a part of feedback.  Here, we suppose that the feedback in our setup does not include such unitaries, but is only given by the continuous-time evolution generated by a Hamiltonian, written as 
\begin{equation}\label{unitary with feedback Hamiltonian}
    \hat{U}^{(i)}_{m_1,\cdots,m_i}(\tau_i) := e^{-i\int_{0}^{\tau_i}\hat{H}^{(i)}_{m_1,\cdots,m_i}(t_i)dt_i}.
\end{equation}

In the main theorem (stated later) of this work, we further assume that the unitary $\hat V_m$ is absent in the measurement and thus the Kraus operator is simply given by $\hat M_m = \sqrt{\hat E_m}$ ($\geq 0$).  We refer to this class of measurements as \textit{bare} measurements \cite{Jacobs_2005}. We note that $\hat M_m = \sqrt{\hat E_m}$ is a unique choice making the Kraus operator positive, and it is known that such a class can be characterized as measurements with minimal disturbance\cite{Wiseman_Milburn_2009,barnum2002informationdisturbancetradeoffquantummeasurement}. Bare measurements include, for example, quantum non-demolition (QND) measurements \cite{BraginskyKhalili1996RMPQND}, which have diverse experimental applications \cite{Guerlin2007,PhysRevLett.102.033601,Murch2013Nature12539,Kuzmich2000PRL851594,Lupascu2007NatPhysNphys509,Anand2024}.

In addition, we assume that the sequence of measurements is non-adaptive throughout this paper. That is, whereas feedback is dependent on the preceding measurement outcomes, the measurement itself is fixed in advance and does not depend on preceding outcomes.

\begin{figure}
\centering
\includegraphics[width=1\linewidth]{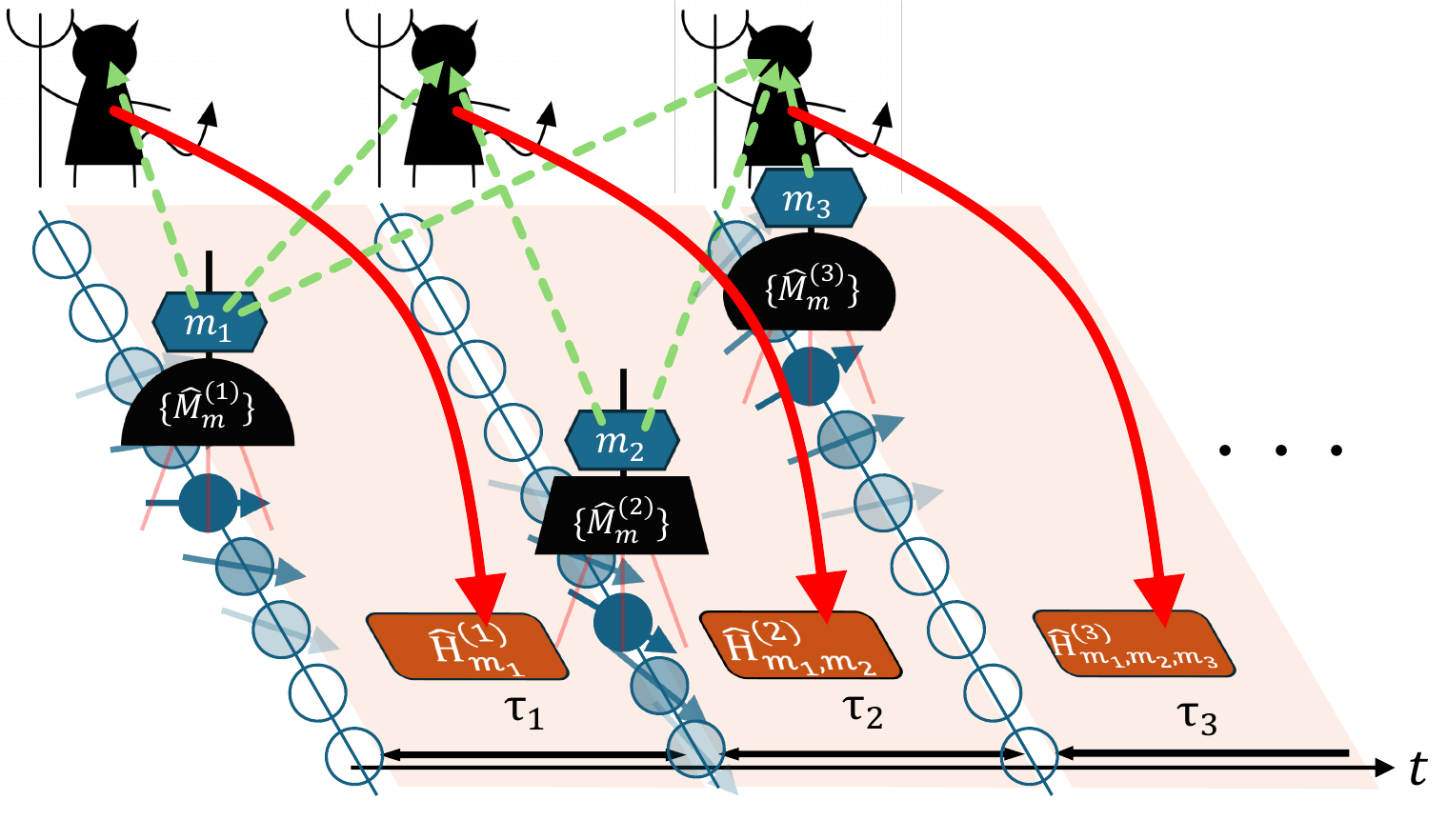}
\caption{Schematic of successive feedback control with non-adaptive measurements. The Kraus operators $\{ \hat{M}_m^{(i)} \}$ of the $i$-th measurement can depend on $i$ but are fixed in advance, and the feedback is applied via feedback Hamiltonians conditioned on previous outcomes.} 
\label{Schematic}
\end{figure}

We now consider a more detailed setup relevant to topological classifications.  Consider a single particle on a $d$-dimensional lattice with translational symmetry. 
Under periodic boundary conditions (PBC), $\mathcal{E}$ can be block-diagonalized in the eigenspaces of the translation operators. Each block corresponds to the Bloch matrix \cite{Nakagawa_2025} defined by 
\begin{equation}\label{Bloch Matrix}
    \begin{aligned}
        & X_{a,b,\bm{\mu};c,d,\bm{\nu}}(\bm{k}) \\
        & := \frac{1}{N_\text{cell}} \sum_{\bm{j,j'}}^{}\sum_{m}^{} (\hat{K}_m)_{\bm{j},a;\bm{j'},c} (\hat{K}_m)^*_{\bm{j+\mu},b;\bm{j'+\nu},d} e^{-i\bm{k}\cdot(\bm{R_j}-\bm{R_{j'}})},
    \end{aligned}
\end{equation}
where $\bm{j,j',\mu,\nu}$ denote the lattice sites, $a,b,c,d$ denote the internal degrees of freedom, $\bm{R}_j$ is the position of site $\bm{j}$, and $N_{\text{cell}}$ is the number of unit cells.
Thus, solving the eigenvalue problem $\mathcal{E}(\hat{\rho}_n) = \xi_n \hat{\rho}_n$ is reduced to that of the Bloch matrices $X(\bm{k})$. In the limit of infinite system size, $\bm{k}$ can be treated as continuous.

\textit{Symmetry classification of successive feedback control with non-adaptive bare measurements.---}
The symmetries of topological insulators are classified by the presence or absence of (i) time reversal symmetry (TRS) $U_T H^* U_T^\dagger = H$, (ii) particle-hole symmetry (PHS) $U_C H^\top U_C^\dagger = -H$, and (iii) chiral symmetry (CS) $U_S H^\dagger U_S^\dagger = -H$, which leads to 10 classes called AZ classes \cite{PhysRevB.55.1142}. 
When the system is non-Hermitian, the transposition and complex conjugation are no longer equivalent. As a result, two symmetries TRS$^\dagger$: $U_T H^\top U_T^{-1} = H$ and PHS$^\dagger$: $U_C H^* U_C^{-1} = -H$ become distinct from TRS and PHS, respectively. The 10 classes classified by TRS$^\dagger$, PHS$^\dagger$ and CS are called AZ$^\dagger$.
The most general classification in the non-Hermitian regime is based on the presence or absence of the following symmetries, which yields 38 classes called BL symmetry classes \cite{Bernard_2002,Kawabata_2019}:
\begin{align}
    &P \text{ symmetry} : &\mathcal{U}_P \mathcal{E} \mathcal{U}_P^{-1} = -\mathcal{E},\quad & \mathcal{U}_P^2 = \mathcal{I},\\
    &C \text{ symmetry} : &\mathcal{U}_C \mathcal{E}^\top \mathcal{U}_C^{-1} = \epsilon_C\mathcal{E},\quad & \mathcal{U}_C \mathcal{U}_C^* = \eta_C \mathcal{I},\\
    &K \text{ symmetry} : &\mathcal{U}_K \mathcal{E}^* \mathcal{U}_K^{-1} = \epsilon_K\mathcal{E},\quad & \mathcal{U}_K \mathcal{U}_K^* = \eta_K \mathcal{I},\\
    &Q \text{ symmetry} : &\mathcal{U}_Q \mathcal{E}^\dagger \mathcal{U}_Q^{-1} = \epsilon_Q\mathcal{E},\quad & \mathcal{U}_Q^2 = \mathcal{I},
\end{align} 
where $\mathcal{U}_\chi$ ($\chi=P,C,K,Q$) are unitary, $\epsilon_\chi = \pm1$ ($\chi=C,K,Q$), $\eta_\chi = \pm1$ ($\chi=C,K$), and $\mathcal{I}$ is the identity. AZ and AZ$^\dagger$ are subclasses of BL classes.

Since CPTP super-operators are non-Hermitian, the framework of the BL symmetry classes is applied to general quantum feedback control \cite{Nakagawa_2025}. If $\mathcal{E}$ exhibits an additional unitary symmetry, it can be block diagonalized with respect to the symmetry sectors until each block no longer has unitary symmetry. 
It suffices to analyze the BL symmetry of each block, where no further unitary symmetry is present.
In addition, we assume that each block has a steady state, and symmetry classes do not depend on $\tau_i \geq 0$, as also assumed in Ref. \cite{Nakagawa_2025}.
That is, any symmetry sector has an eigenvalue $\xi=1$, and its entire feedback process is characterized by a single symmetry class independent of the feedback durations. 

Now, we present the following theorem, which is the main result of this Letter.
\begin{thm}\label{main}
    Consider a family of CPTP maps of feedback control $\left\{ \mathcal{E}_{\tau_1,\cdots,\tau_N} \right\}$ in the form of Eqs.(\ref{general form of feedback control})-(\ref{unitary with feedback Hamiltonian}). If every measurement in the protocol is bare, then the possible BL symmetry classes of $\mathcal{E}_{\tau_1,\cdots,\tau_N}$ reduce to the ten-fold AZ$^\dagger$ subclass listed in Table \ref{tab:BL10}.
\end{thm}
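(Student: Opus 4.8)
The plan is to pass to the vectorized (double-space) picture, in which the channel becomes the matrix $\mathcal{E}_{\tau_1,\dots,\tau_N}=\sum_m \hat{K}_m\otimes\hat{K}_m^{*}$ and the Bloch matrix $X(\bm{k})$ of Eq.~(\ref{Bloch Matrix}) is its momentum block. First I would isolate the one antiunitary symmetry that Hermiticity preservation forces on \emph{every} channel: writing $\mathcal{S}$ for the swap of the two copies, a direct computation gives $\mathcal{S}\,\mathcal{E}^{*}\,\mathcal{S}^{-1}=\mathcal{E}$ (equivalently $\mathcal{S}\,\mathcal{E}^{T}\,\mathcal{S}^{-1}=\mathcal{E}^{\dagger}$), with $\mathcal{S}^{2}=\mathcal{I}$. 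This built-in symmetry is the anchor that ties the classification to the AZ$^{\dagger}$ family of Refs.~\cite{Kawabata_2019,Nakagawa_2025}. Because $\mathcal{S}$ is fixed, the operations $C$, $K$, and $Q$ are no longer independent: each composes with $\mathcal{S}$ into one of the others or into a unitary symmetry. Since we have already block-diagonalized away all unitary symmetries, the BL class of a block is therefore carried by a single additional $\mathbb{Z}_2$ generator, and the whole problem reduces to pinning down its type and its signs $(\epsilon,\eta)$.

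Bareness then enters through positivity. Because $\hat{K}_m=\hat{U}_m\hat{M}_m$ with $\hat{M}_m=\sqrt{\hat{E}_m}\ge 0$, each measurement step contributes the superoperator $\mathcal{M}^{(i)}=\sum_m \hat{M}^{(i)}_m\otimes\hat{M}^{(i)*}_m$, which is positive semidefinite: every summand $\hat{M}\otimes\hat{M}^{*}$ is the tensor product of a positive operator with its conjugate. The feedback contributes only the unitary superoperators $\mathcal{U}^{(i)}(\tau_i)$ of Eq.~(\ref{unitary with feedback Hamiltonian}). For a single bare measurement the mechanism is then transparent: a candidate symmetry that would carry $\mathcal{E}\mapsto-\mathcal{E}$ — precisely the sign that, combined with $\mathcal{S}$, drives the block out of AZ$^{\dagger}$ and into AZ — would force a unitary conjugate of the positive-semidefinite $\mathcal{M}^{(i)}$ onto its negative. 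Unitary conjugation preserves the non-negative spectrum whereas negation reverses it, so the identity is impossible unless $\mathcal{M}^{(i)}=0$. Positivity thus eliminates exactly the spectrum-reversing BL signs, leaving only the AZ$^{\dagger}$ combinations; I would run the same spectral dichotomy on the $\eta$ data and on a would-be sublattice ($P$) symmetry to match Table~\ref{tab:BL10} entry by entry.

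The hard part is upgrading this from one measurement to genuinely successive control. Setting all $\tau_i=0$ turns $\mathcal{E}$ into the ordered product $\mathcal{M}^{(N)}\cdots\mathcal{M}^{(1)}$ of positive superoperators, and a product of three or more non-commuting positive factors need not be positive, nor even have real spectrum, so the clean spectral argument no longer applies directly and a per-step reduction is not automatic. My strategy here would exploit the two structural hypotheses not yet used: the feedback durations $\tau_1,\dots,\tau_N$ are independent knobs, and the symmetry class is assumed independent of them. Matching the $\tau_i$-dependence on the two sides of the symmetry relation $\mathcal{U}_\chi\,\mathcal{E}'_{\tau_1,\dots,\tau_N}\,\mathcal{U}_\chi^{-1}=\epsilon\,\mathcal{E}_{\tau_1,\dots,\tau_N}$, where $\mathcal{E}'$ denotes $\mathcal{E}^{*}$, $\mathcal{E}^{T}$, or $\mathcal{E}^{\dagger}$ as appropriate — an analytic identity in each $e^{-i\tau_i\hat{H}^{(i)}}$ — should force the symmetry to act compatibly on each building block $\mathcal{M}^{(i)}$ and $\hat{H}^{(i)}$ with a common sign, thereby reducing the global constraint to the single-step positivity statement already established. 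Carrying out this disentangling cleanly — in particular handling the outcome-conditioned feedback, which prevents $\mathcal{E}$ from factorizing literally as a product of step maps — is where I expect the real work to lie, and it is the step I would develop most carefully. Finally, I would confirm minimality by checking that dropping positivity reopens the spectrum-reversing signs, consistent with the explicit non-bare counterexample promised later in the paper.
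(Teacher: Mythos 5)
Your single-measurement argument is correct, and it is a genuine (positivity-based) alternative to the paper's End Matter observation that a one-measurement protocol outside AZ$^\dagger$ needs traceless Kraus operators: since $\hat M_m\ge 0$, the step superoperator $\mathcal{M}=\sum_m \hat M_m\otimes\hat M_m^{*}$ is PSD, as are $\mathcal{M}^{T},\mathcal{M}^{*},\mathcal{M}^{\dagger}$, so any relation $\mathcal{U}\,\mathcal{M}'\,\mathcal{U}^{-1}=-\mathcal{M}$ forces $\mathcal{M}=0$, impossible for a trace-preserving map. The genuine gap is exactly where you place ``the real work'': the successive case, which is the actual content of Theorem~\ref{main}. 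Your proposed repair---matching the $\tau_i$-dependence of the symmetry identity so as to force the symmetry to act ``compatibly on each building block''---is both undeveloped and built on a false principle: a symmetry of a composite map does not descend to its factors (a product can obey a sign-flipping or commutation relation without any factor doing so), and, more importantly, the paper's standing assumption is only that the symmetry \emph{class} is $\tau$-independent, not that a single fixed $\mathcal{U}_\chi$ intertwines $\mathcal{E}_{\tau_1,\dots,\tau_N}$ for all $\tau_i$; hence there is no analytic identity in the $\tau_i$ to differentiate. (Also, your worry about outcome-conditioned feedback obstructing factorization is moot: at $\tau_i=0$ all feedback unitaries are the identity and, since the measurements are non-adaptive, $\mathcal{E}$ factorizes exactly as $\mathcal{M}^{(N)}\cdots\mathcal{M}^{(1)}$.)

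The missing idea is spectral and sits inside your own framework: each bare step is not only PSD but also a Hilbert--Schmidt contraction, because bareness implies unitality ($\sum_m\hat M_m\hat M_m^{\dagger}=\sum_m\hat M_m^{\dagger}\hat M_m=\hat I$) and unital CPTP maps do not increase $\|\cdot\|_2$; this is the inequality part of the paper's Lemma~\ref{1}. For a PSD contraction, $\|\mathcal{M}v\|=\|v\|$ forces $\mathcal{M}v=v$ (write $\mathcal{M}=\sum_i\lambda_iP_i$ with $\lambda_i\in[0,1]$), which is the equality part of Lemma~\ref{1} in vectorized language (the paper proves it via the commutator identity $\|\hat\rho\|_2^2-\|\mathcal{E}(\hat\rho)\|_2^2=\tfrac12\sum_{\alpha,\beta}\|[\hat\rho,\hat M_\alpha^{\dagger}\hat M_\beta]\|_2^2$ plus positivity of the $\hat E_\alpha$). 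Then, if the product $\mathcal{M}^{(N)}\cdots\mathcal{M}^{(1)}$ has an eigenvalue $\xi$ with $|\xi|=1$, the chain of norm inequalities collapses to equalities, the eigenvector is a common fixed point of every step, and $\xi=1$: the composite's only unimodular eigenvalue is $1$, with no positivity of the composite needed. Combined with the assumption that every symmetry block has a steady state (eigenvalue $1$), any $P$ symmetry, or $C/K/Q$ symmetry with $\epsilon_\chi=-1$, would put $-1$ in the spectrum---a contradiction---and the surviving combinations are exactly the ten classes of Table~\ref{tab:BL10}. This spectral route is the paper's proof; your positivity observation is a valid seed for it, but without the contraction property and the fixed-point induction your argument does not reach the successive case.
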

\begin{table}
    \centering
    \caption{Ten classes admissible to successive quantum feedback controls with non-adaptive bare measurements. These are equivalent to AZ$^\dagger$ subclass of the BL symmetry classes if we multiply the original CPTP map by $i$. Here, 0 (1) represents that each symmetry is absent (present), and $\pm$ indicates the sign of $\eta_{C,K}$. We set $\epsilon_{C,K,Q} = +1$ in this table.}
    \label{tab:BL10}
    \begin{tabular}{lccc}
    \toprule
    Class & $C$ & $K$ & $Q$ \\
    \midrule
    $\mathrm A$ & 0 & 0 & 0 \\
    $\mathrm{psH}$ ($\mathrm{AIII}$) & 0 & 0 & 1 \\
    $\mathrm{AI}^{\dagger}$ & + & 0 & 0 \\
    $\mathrm{AI}+\mathrm{psH}^+$ ($\mathrm{BDI}^{\dagger}$) & + & + & 1 \\
    $\mathrm{AI}$ ($\mathrm D^{\dagger}$) & 0 & + & 0 \\
    $\mathrm{AI}+\mathrm{psH}^- $ ($\mathrm{DIII}^{\dagger}$) & $-$ & + & 1 \\
    $\mathrm{AII}^{\dagger}$ & $-$ & 0 & 0 \\
    $\mathrm{AII}+\mathrm{psH}^+$ ($\mathrm{CII}^{\dagger}$) & $-$ & $-$ & 1 \\
    $\mathrm{AII}$ ($\mathrm C^{\dagger}$) & 0 & $-$ & 0 \\
    $\mathrm{AII}+\mathrm{psH}^- $ ($\mathrm{CI}^{\dagger}$) & + & $-$ & 1 \\
    \bottomrule
    \end{tabular}
\end{table}
\noindent

\textit{Sketch of the proof.---}
We prove Theorem \ref{main} as follows.
First, from the assumption mentioned before, it is sufficient to consider the case of $\tau_i \to 0$, where the CPTP map is reduced to a sequence of bare measurements.
We can then utilize the following lemma, whose proof is given in the End Matter \ref{proof}. We note that the inequality part of the lemma is well-known \cite{sagawa_entropy_2022}, while the necessary condition for the equality might be nontrivial.

\begin{lemma}\label{1}
    The CPTP map of a bare measurement is contractive for Hilbert-Schmidt norm. That is, for every operator $\hat{\rho} \in \mathcal{L}(\mathcal{H})$,
    $||\hat{\rho}||_2 \geq ||\mathcal{E}(\hat{\rho})||_2$ holds for the CPTP map of a bare measurement $\mathcal{E}$ \cite{P_rez_Garc_a_2006}. Moreover, the equality holds if and only if $\hat{\rho} = \mathcal{E}(\hat{\rho})$.
\end{lemma}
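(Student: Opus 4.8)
The plan is to exploit the special structure that bareness imposes on $\mathcal{E}$. Since each Kraus operator is positive, $\hat{M}_m=\sqrt{\hat{E}_m}=\hat{M}_m^\dagger$, the map $\mathcal{E}(\hat{\rho})=\sum_m\hat{M}_m\hat{\rho}\hat{M}_m$ is simultaneously trace preserving and unital ($\mathcal{E}(\hat{I})=\sum_m\hat{M}_m^2=\hat{I}$), and it is \emph{self-adjoint} with respect to the Hilbert--Schmidt inner product $\langle A,B\rangle=\mathrm{Tr}(A^\dagger B)$, because $\mathcal{E}^\dagger(\hat{O})=\sum_m\hat{M}_m^\dagger\hat{O}\hat{M}_m=\mathcal{E}(\hat{O})$. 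First I would record the inequality, which is the ``well-known'' part: applying the Kadison--Schwarz inequality to the unital, completely positive (hence $2$-positive) map $\mathcal{E}$ gives $\mathcal{E}(\hat{\rho})^\dagger\mathcal{E}(\hat{\rho})\le\mathcal{E}(\hat{\rho}^\dagger\hat{\rho})$ as operators, and taking the trace together with trace preservation yields $||\mathcal{E}(\hat{\rho})||_2^2\le\mathrm{Tr}[\mathcal{E}(\hat{\rho}^\dagger\hat{\rho})]=||\hat{\rho}||_2^2$.

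For the equality statement, the direction $\hat{\rho}=\mathcal{E}(\hat{\rho})\Rightarrow$ equal norms is immediate, so the content is the converse. Since $\mathcal{E}$ is self-adjoint it is diagonalizable with real eigenvalues $\{\lambda_i\}$ and an orthonormal eigenbasis $\{\hat{\rho}_i\}$ of $\mathcal{L}(\mathcal{H})$, and the inequality just proved shows $|\lambda_i|\le1$. Expanding $\hat{\rho}=\sum_i c_i\hat{\rho}_i$ gives $||\hat{\rho}||_2^2-||\mathcal{E}(\hat{\rho})||_2^2=\sum_i(1-\lambda_i^2)|c_i|^2$, so equality forces $c_i=0$ whenever $|\lambda_i|<1$; that is, $\hat{\rho}$ is supported on the eigenspaces with $\lambda_i=\pm1$. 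Thus the whole statement reduces to the claim that $\mathcal{E}$ has \emph{no eigenvalue} $-1$: once this is shown, equality confines $\hat{\rho}$ to the $\lambda_i=+1$ eigenspace, which is exactly $\hat{\rho}=\mathcal{E}(\hat{\rho})$.

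Excluding the eigenvalue $-1$ is the main obstacle, and it is precisely where positivity of the Kraus operators enters. Suppose $\mathcal{E}(\hat{\rho})=-\hat{\rho}$ with $\hat{\rho}\ne0$; since $\mathcal{E}$ preserves Hermiticity I may take $\hat{\rho}=\hat{\rho}^\dagger$. Such a $\hat{\rho}$ saturates the contraction bound, so the Kadison--Schwarz inequality is saturated, $\mathcal{E}(\hat{\rho}^2)=\mathcal{E}(\hat{\rho})^2$. Writing the Kadison--Schwarz defect through a Stinespring isometry $\hat{V}|\psi\rangle=\sum_m(\hat{M}_m|\psi\rangle)\otimes|m\rangle$ (so that $\mathcal{E}(X)=\hat{V}^\dagger(X\otimes\hat{I})\hat{V}$), the defect equals $\hat{V}^\dagger(\hat{\rho}\otimes\hat{I})(\hat{I}-\hat{P})(\hat{\rho}\otimes\hat{I})\hat{V}$ with $\hat{P}=\hat{V}\hat{V}^\dagger$, a manifestly positive quantity of the form $W^\dagger W$; its trace vanishing forces $W=(\hat{I}-\hat{P})(\hat{\rho}\otimes\hat{I})\hat{V}=0$, which componentwise is the intertwining relation $\hat{\rho}\hat{M}_m=\hat{M}_m\mathcal{E}(\hat{\rho})=-\hat{M}_m\hat{\rho}$ for every $m$; that is, $\hat{\rho}$ anticommutes with each $\hat{M}_m$.

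Now I would close the argument using bareness. From $\hat{\rho}\hat{M}_m=-\hat{M}_m\hat{\rho}$ I get $\hat{\rho}\hat{E}_m=\hat{\rho}\hat{M}_m^2=\hat{E}_m\hat{\rho}$, so $\hat{\rho}$ commutes with each $\hat{E}_m$. Because $\hat{M}_m=\sqrt{\hat{E}_m}$ is a function of $\hat{E}_m$ (continuous functional calculus), anything commuting with $\hat{E}_m$ also commutes with $\hat{M}_m$; combined with the anticommutation this gives $\hat{M}_m\hat{\rho}=0$ for all $m$, whence $\hat{\rho}=\sum_m\hat{E}_m\hat{\rho}=\sum_m\hat{M}_m(\hat{M}_m\hat{\rho})=0$, a contradiction. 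Hence no eigenvalue $-1$ exists and the lemma follows. I expect the delicate point to be exactly this last step: for a generic unital CPTP map an eigenvalue $-1$ is perfectly possible, and it is the positivity $\hat{M}_m\ge0$---which lets $[\hat{\rho},\hat{E}_m]=0$ upgrade to $[\hat{\rho},\hat{M}_m]=0$---that rules it out, consistent with the paper's claim that bareness draws the precise boundary of the AZ$^\dagger$ collapse.
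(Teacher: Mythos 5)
Your proof is correct, but it takes a genuinely different route from the paper's. The paper proves an exact identity (its Lemma A): for any \emph{unital} CPTP map with Kraus operators $\{\hat{M}_\alpha\}$,
\begin{equation*}
  \|\hat{\rho}\|_2^2-\|\mathcal{E}(\hat{\rho})\|_2^2=\tfrac{1}{2}\sum_{\alpha,\beta}\bigl\|\bigl[\hat{\rho},\hat{M}_\alpha^\dagger\hat{M}_\beta\bigr]\bigr\|_2^2 ,
\end{equation*}
obtained by a direct trace computation. Equality then immediately forces $[\hat{\rho},\hat{E}_\alpha]=0$ (take $\alpha=\beta$), which upgrades to $[\hat{\rho},\sqrt{\hat{E}_\alpha}]=0$ by positivity of the spectrum (the same functional-calculus step you use), whence $\mathcal{E}(\hat{\rho})=\sum_\alpha\hat{E}_\alpha\hat{\rho}=\hat{\rho}$ in one line --- no spectral theory of the superoperator, no dilation, and the identity itself also yields the inequality without invoking Kadison--Schwarz. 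You instead exploit that bareness makes $\mathcal{E}$ Hilbert--Schmidt self-adjoint, diagonalize it, reduce the equality case to excluding the eigenvalue $-1$, and rule that out by saturating Kadison--Schwarz through a Stinespring isometry to get anticommutation $\hat{\rho}\hat{M}_m=-\hat{M}_m\hat{\rho}$, then commutation with $\hat{E}_m$, then (by the same square-root step) commutation with $\hat{M}_m$, forcing $\hat{M}_m\hat{\rho}=0$ and $\hat{\rho}=0$. Each step of yours checks out, including the reduction of a general (non-Hermitian) $\hat{\rho}$ to a Hermitian $-1$ eigenvector via adjoint-invariance of the eigenspace. What your route buys is an explicitly spectral statement --- bare channels have no eigenvalue $-1$, while generic unital channels can --- which connects directly to how Lemma 1 is deployed in the main text to constrain peripheral spectra and hence the symmetry classes; what the paper's route buys is economy and a quantitative defect formula valid for all unital CPTP maps, with positivity entering only at the final square-root step.
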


Let $\mathcal{E}$ be a CPTP map of successive bare measurements and write $\mathcal{E}= \mathcal{E}^{(n)}\circ \cdots\circ \mathcal{E}^{(1)}$,
where $\mathcal{E}^{(i)}$ is the CPTP map of the $i$-th measurement.
Since each $\mathcal{E}^{(i)}$ is bare, the Hilbert-Schmidt norm of $\hat{\rho}$ does not increase under $\mathcal{E}^{(i)}$ from Lemma \ref{1}. Therefore, if $\mathcal{E}$ has an eigenvalue whose absolute value is 1, the induction shows $\mathcal{E}^{(i)}(\hat{\rho}) = \hat{\rho}$ for any $i$,
and hence $1$ is the only eigenvalue with unit absolute value.

On the other hand, 
we assumed that every symmetry sector has the eigenvalue $\xi = 1$.
This spectral structure forbids certain BL symmetries. Specifically, if $\mathcal{E}$ has an eigenvalue $\xi$ and has either $P$ symmetry or $C$ symmetry with $\epsilon_C = -1$, $-\xi$ is also an eigenvalue. Likewise, if $\mathcal{E}$ has $K$ or $Q$ symmetry, $\epsilon_K \xi^*$ or $\epsilon_Q\xi^*$ is also an eigenvalue. Consequently, successive bare measurements are incompatible with $P$ symmetry or any $\chi$ symmetry with $\epsilon_\chi = -1$ ($\chi = C,K,Q$). 
Taken together with some intrinsic constraints on the BL symmetries and the assumption that $\mathcal{E}$ has no residual unitary symmetry,
this result yields precisely the ten-fold AZ$^\dagger$ classes in Table \ref{tab:BL10} \cite{Nakagawa_2025}. 

The key in this proof is that the spectrum of successive bare measurements cannot respect certain symmetries. Therefore, it is expected that this result is robust against small perturbations or noise (i.e., some small unitaries after each bare measurement), because they are unlikely to induce those symmetries. For instance, in two-level systems, 
even if we can accurately engineer the $\mathcal{O}(\epsilon)$ perturbations, we need at least $N \gtrsim \mathcal{O}(\frac{1}{\sqrt{\epsilon}})$ measurements to produce $\xi = -1$ (see End Matter \ref{Zeno}).

We note that, in the above proof, we did not use the assumption that the feedback operation is unitary, but we only used the assumption that the symmetry class does not change in the limit of $\tau_i \to 0$.  Therefore, Theorem 1 is straightforwardly extended to the case that the feedback operation is dissipative and described by the Lindblad equations \cite{10.1063/1.522979,1976CMaPh..48..119L}.

\textit{Chiral Maxwell's demon with Gaussian errors.---}
We present an example of bare measurement-based feedback control.
We consider a chiral Maxwell's demon \cite{Nakagawa_2025} that cannot acquire precise information of the particle's position due to the measurement error.
A single particle hops on a one-dimensional lattice with $L$ sites, and the demon measures the particle's position and raises the potential at the left-neighboring site of the measurement outcome. We assume that the measurement has Gaussian errors; the measurement operator for outcome $m$ is given by \cite{Jacobs_2006}
\begin{equation}
    \hat{M}_m = \frac{1}{\mathcal{N}} \sum_{n=-c}^{c}e^{-\frac{\ell}{4}n^2}\ketbra{m-n},
\end{equation}
where $c$ is the cutoff of the error, $\ell$ controls the sharpness of the error, and $\mathcal{N}$ is the normalization factor. To keep the Bloch matrices finite-dimensional, we take the constant $c$ independent of the system size $L$.
After obtaining the outcome $m$, the demon applies feedback by raising the potential at site $m-1$. The corresponding feedback Hamiltonian is given by
\begin{equation}
    \hat{H}_m = -J \sum_{n}^{}(\ketbra{n}{n+1} + h.c.) + V\ketbra{m-1},
\end{equation}
where $J$ is the hopping amplitude and $V$ is the potential height raised by the demon. The Bloch matrices are 
\begin{align}
    X_{\mu\nu}(k) & = \frac{1}{L}\sum_{m}^{}\sum_{r,s}^{}(\hat{K}_m)_{r,s}(\hat{K}_m)^*_{r+\mu,s+\nu} e^{-ik(r-s)}\\
    & = \sum_{r}^{}\sum_{m}^{}(\hat{K}_m)_{r,0}(\hat{K}_m)^*_{r+\mu,\nu} e^{-ikr},
\end{align}
where we used the translational symmetry $(\hat{K}_m)_{r,s} = (\hat{K}_{m+i})_{r+i,s+i}$ to obtain the second line. The matrix elements of the Kraus operators are written as 
\begin{equation}
    \begin{aligned}
        (\hat{K}_m)_{r,s} & = \frac{1}{\mathcal{N}} \sum_{n=-c}^{c}e^{-\frac{\ell}{4}n^2}\bra{r}\hat{U}_m\ket{m-n}\bra{m-n}\ket{s}\\
        & = \frac{1}{\mathcal{N}}e^{-\frac{\ell}{4}(m-s)^2}\bra{r}\hat{U}_m\ket{s} \mathbbm{1}_{[-c+s,c+s]}(m),
    \end{aligned}
\end{equation}
where $\mathbbm{1}$ is the characteristic function, and hence the elements of the Bloch matrices are given by 
\begin{equation}
    \begin{aligned}
        X_{\mu\nu}(k) = & \sum_{r}^{}\sum_{m}^{}\frac{\mathbbm{1}_{[-c,c]\cap[-c+\nu,c+\nu]}(m)}{\mathcal{N}^2}e^{-\frac{\ell(m^2+(m-\nu)^2)}{4}}\\
         & \hspace{1.2cm}\times \bra{r}\hat{U}_m\ketbra{0}{\nu}\hat{U}_m^\dagger\ket{r+\mu} e^{-ikr}.
    \end{aligned}
\end{equation}
Therefore, $X_{\mu\nu}(k)$ has nonzero entries only in $4c+1$ columns with $-2c\leq\nu\leq2c$, and hence can be truncated to a $4c+1$ dimensional square matrix $X_{\mathrm{trunc}}(k)$. The winding number around the base-point $\xi_{\text{PG}} \in \mathbb{C}$ is defined by $w(\xi_{\mathrm{PG}}) = \frac{1}{2\pi i} \int_{-\pi}^{\pi} \partial_k \log{\det(X_{\mathrm{trunc}}(k)-\xi_{\mathrm{PG}})} dk$.

\begin{figure}
    \includegraphics[width=\linewidth]{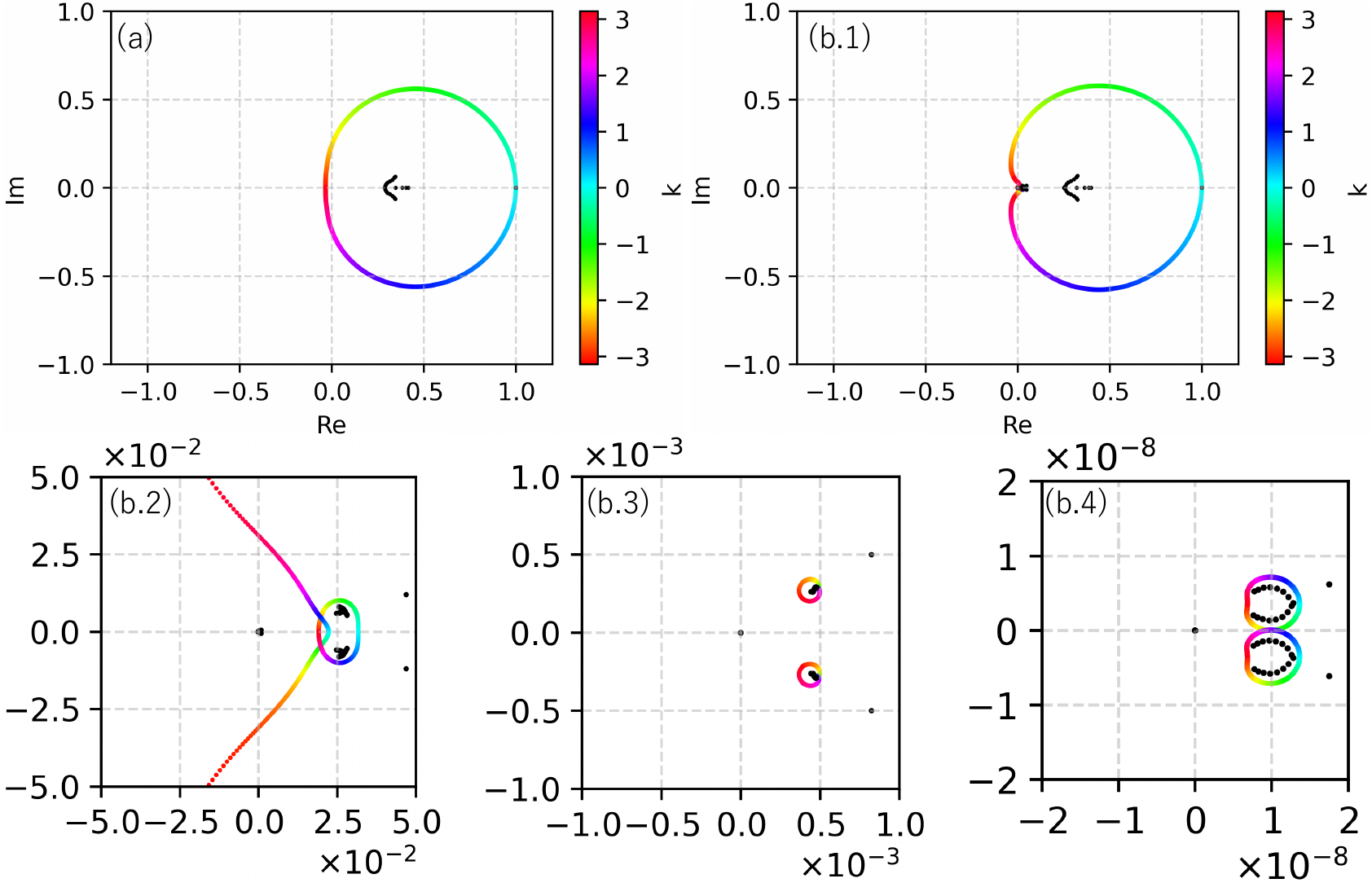}  
    \caption{Spectra of the CPTP maps of chiral Maxwell's demon with projective measurement (a) and that with Gaussian errors (b.1) -- (b.4). The colored dots are the eigenvalues under PBC for size $L=400$, whose colors correspond to the wavenumber $k$ of the Bloch matrices, and the black dots are the eigenvalues under OBC for size $L = 20$.
The unit of energy, the height of the potential and the feedback duration are set to be $J=1,V=10^2$, $\tau=1$ respectively. We set the radius of the error (cutoff) to be $c=2$, and the Gaussian sharpness is set to be $\ell=15$. The winding number $w$ around the origin is $w(\xi_{\text{PG}}=0)=-1$ for the left and $w(\xi_{\text{PG}}=0)=0$ for the right, and that around $1-\epsilon$ is $w(\xi_{\text{PG}}=1-\epsilon)=-1$ for both pictures with $\epsilon = 10^{-12}$.}
\label{spectrum_of_chiralMD_with_gaussian_error}
\end{figure}

Figure \ref{spectrum_of_chiralMD_with_gaussian_error} shows the eigenspectrum of the CPTP map $\mathcal{E}$ of chiral Maxwell's demon with Gaussian errors.
The result is for finite system size $L$, but we observe the robustness of the spectrum against an increase in $L$ (see End Matter \ref{systemsize}), which justifies the extrapolation to $L\to\infty$.
We can observe that the high degeneracy at $\xi=0$ is partially broken. This is because the error of the measurement keeps the off-diagonal elements $\ketbra{i}{j}$ with $|i-j| \leq 2c$ nonzero, leaving $4c+1$ bands. If we do not take the cutoff $c$ or let $c$ be proportional to the system size $L$, an infinite number of bands arise (i.e., $\rank X(k) \to \infty$), making the winding number ill-defined (see also End Matter \ref{behavior_c}). This is consistent with the framework in \cite{Nakagawa_2025}, where the locality of the measurement and feedback is assumed.
Notably, these additional bands around the origin change the winding number around $\xi_{\text{PG}}=0$. 
By contrast, the spectral structure around $\xi=1$ is insensitive to the measurement error, which motivates us to alternatively choose the base-point at $\xi_{\text{PG}} = 1-\epsilon$ with sufficiently small $\epsilon > 0$. The reason why we need $\epsilon$ is that the spectrum of any CPTP map exactly passes $1$ due to the trace-preserving condition, while the base-point should be avoided by the spectrum when defining the winding number.
Another motivation for $\xi_{\text{PG}} = 1-\epsilon$ is the analogy with the topology of one-dimensional classical stochastic processes \cite{Murugan2017,sawada2024bulkboundarycorrespondenceergodicnonergodic,Sawada_2024}, where the natural choice of the base-point corresponds to the steady state.

We remark that the extra bands near the origin have a quantum origin; some off-diagonal degrees of freedom survive due to measurement error and contribute to the new bands. This cannot happen with classical stochastic dynamics because it does not make off-diagonal elements alive, leaving only one band in the complex plane. This sensitivity to off-diagonal elements
may be exploited to elucidate the role of quantum coherence in quantum feedback control.

As shown by black dots in Fig. \ref{spectrum_of_chiralMD_with_gaussian_error}, the loop structure of the spectrum under PBC collapses under the open boundary condition (OBC) for both the cases with and without errors. This implies that the non-Hermitian skin effect \cite{PhysRevX.8.031079,Okuma_2023,Okuma_2020,PhysRevLett.125.126402} of quantum feedback control \cite{Nakagawa_2025} persists under measurement errors. 

We also note that this model can be applied to the continuous measurement regime by taking $\ell$ proportional to $\tau$ and letting $\tau\to0$ \cite{Jacobs_2006}. However, from this perspective it is no longer possible to appropriately take the cutoff $c$ of the Gaussian tail (i.e., the measurement cannot be local anymore), and therefore, the winding number might not be well-defined since the size of the Bloch matrices diverges as $L\to\infty$, accompanied by an infinite number of bands scattered on the complex plane. Nevertheless, based on the observed behavior of the spectrum against $c$ (see also End Matter \ref{behavior_c}), we expect that those infinite bands are condensed around $0$. If the only accumulation point of the spectrum is $0$, we are still able to define the winding number around $\xi_{\text{PG}} = 1 - \epsilon$.

\textit{Feedback control beyond the ten-fold symmetry.---}
We have shown that successive feedback control with bare measurements cannot realize symmetry classes outside the AZ$^\dagger$ classes. Now, we show that some forbidden classes are accessible with non-bare measurements.

Our strategy is to mix the spin-up space and spin-down space to produce $\xi = -1$ within the measurement process. We consider spin-flipping measurement whose Kraus operators are $\hat{M}_{j,\uparrow} = \ketbra{j,\downarrow}{j,\uparrow},\quad \hat{M}_{j,\downarrow} = \ketbra{j,\uparrow}{j,\downarrow}$.
This measurement can be implemented by the projective measurement of position and spin followed by an instantaneous $\pi$ pulse of magnetic field. 
We use this measurement to construct a protocol that has $P$ symmetry as follows.
\begin{enumerate}
    \item Perform the spin-flipping measurement $\{ \hat{M}_{j,\uparrow}, \hat{M}_{j,\downarrow} \}$, and denote the outcome as $m_1$.
    \item Apply a feedback $\hat{U}_{m_1}(\tau)$ conditioned on $m_1$, given by a feedback Hamiltonian via Eq. (\ref{unitary with feedback Hamiltonian}).
    \item Perform the projective measurement with $\left\{ \ket{j,\sigma} \right\}$.
\end{enumerate}

The Kraus operators of the corresponding CPTP map $\mathcal{E}(\hat{\rho}) = \sum_{m_1,m_2}^{}\hat{K}_{m_1,m_2}\hat{\rho}\hat{K}_{m_1,m_2}^\dagger$ are given by
\begin{align}
    \hat{K}_{m_1,m_2} = \ketbra{m_2}\hat{U}_{m_1}(\tau)\ketbra{\overline{m_1}}{m_1},
\end{align}
where $\overline{m} = (j,\uparrow)$ and $(j,\downarrow)$ are defined for $m = (j,\downarrow)$ and $(j,\uparrow)$, respectively. 
Collecting the non-zero blocks of the Bloch matrices defined in Eq.(\ref{Bloch Matrix}), we get 
\begin{gather}
    X_{\text{trunc}}(k)= \begin{pmatrix} \xi_{\uparrow\downarrow}(k) & \xi_{\uparrow\uparrow}(k)\\ \xi_{\downarrow\downarrow}(k) & \xi_{\downarrow\uparrow}(k) \end{pmatrix},\\
    \qquad \xi_{\sigma,\sigma'}(k)=\frac{1}{L}\sum_{j,j'} p(j',\sigma|j,\sigma')\,e^{-ik(R_{j'}-R_j)},
\end{gather}
where $p(m_2|\overline{m_1})$ is the transition probability under the unitary $\hat{U}_{m_1}(\tau)$. If the spin is conserved under the feedback Hamiltonian, $\xi_{\uparrow\downarrow} = \xi_{\downarrow\uparrow} = 0$ holds, and the CPTP map satisfies the following $P$ symmetry:
\begin{equation}
    \sigma_z X_{\text{trunc}}(k) \sigma_z = - X_{\mathrm{trunc}}(k).
\end{equation}
Therefore, $P$ symmetric feedback control beyond the AZ$^\dagger$ classes is indeed achievable. 

We note that the above measurement is not bare but unital. Therefore, unitality of measurements is not enough to exclude the symmetry classes outside the AZ$^\dagger$ classes. In addition, in order to implement symmetry classes outside the AZ$^\dagger$ with a single measurement protocol, we have to make all Kraus operators traceless (see End Matter \ref{beyond}). This requirement is consistent with the above example, where the spin flips make the Kraus operators traceless.

\textit{Conclusions and discussions.---}
We proved that the symmetry classes accessible to a broad class of quantum feedback control reduce to the ten-fold AZ$^\dagger$ classes out of the 38-fold BL classes. 
Importantly, this restriction stems solely from the spectral structure of the measurements.
For instance, as we discussed above, the AZ$^\dagger$ limitation can be lifted by imposing additional symmetries on the spectrum of the measurement with non-bare measurements. 

Symmetry is essential in classifying topological phases, and therefore forms a core design principle for engineering topological quantum feedback control. Our result provides such a guiding principle for a broad class of protocols, especially those with measurement error. 
The chiral Maxwell's demon with Gaussian errors would serve as a practical model in the experimental implementation by, for example, ultracold atoms with single-site addressing techniques \cite{Bakr2009Nature,Sherson2010Nature,Endres2016Science}.

We conclude this Letter with some future prospects.
First, it remains to be discussed whether the AZ$^\dagger$ restriction holds when one allows adaptive measurements, where one can decide what measurement to perform depending on the earlier outcomes.
Moreover, considering that some measurement processes, such as photon counting, inevitably possess a unitary component followed by the bare part \cite{Wiseman_Milburn_2009}, it is essential to discuss what happens beyond the bare measurement assumption.
It is still unclear whether the CPTP condition alone casts some intrinsic constraint on the classification.
Here, we \textit{conjecture} that all the 38 BL symmetry classes are accessible by engineering the unitary parts of non-bare measurements. 

Application of our framework to continuous measurement and feedback is another future perspective. As we discussed above, chiral Maxwell's demon with Gaussian errors can be applied to the continuous measurement regime by taking an appropriate limit of parameters, although its topological aspects are not yet well clarified. 
Elucidating the topological structure in this regime is a key step toward establishing robustness for real-time control and exploring practical design rules for, e.g., continuous error mitigation and cooling.\\

\textit{Acknowledgements.---}
We are grateful to Masaya Nakagawa, Atsushi Oyaizu, Takeshi Fukuhara, Shoki Sugimoto, and Kaito Tojo for valuable discussions.
This work was supported by JST ERATO Grant Number JPMJER2302, Japan. T.S. is also supported by JST CREST Grant No. JPMJCR20C1 and by Institute of AI and Beyond of the University of Tokyo. Z.G. acknowledges support from the University of Tokyo Excellent Young Researcher Program.

\textit{Data availability.---}
The data that support the findings of this article are openly available \cite{wen_2025_17276238}.

\bibliography{Symmetry_of_Feedback_Control}

\appendix
\section*{End Matter}
\renewcommand{\thelemma}{\Alph{lemma}}
\setcounter{lemma}{0}
\renewcommand{\thesubsection}{E\arabic{subsection}}
\subsection{Proof of Lemma 1}\label{proof}
Let $\mathcal{E}$ be a CPTP map of a bare measurement on Hilbert space $\mathcal{H}$. 
Since the inequality part of the lemma is well-known \cite{sagawa_entropy_2022}, here we only prove that for a linear operator $\hat{\rho} \in \mathcal{L}(\mathcal{H})$
\begin{equation}
    || \mathcal{E}(\hat{\rho}) ||_2 = ||\hat{\rho}||_2 \Longleftrightarrow \mathcal{E}(\hat{\rho}) = \hat{\rho}.
\end{equation}

The implication``$\Leftarrow$'' is trivial, so we focus on ''$\Rightarrow$''. we first prove the following lemma.
\begin{lemma}\label{A}
    Let $\mathcal{E}$ be a unital CPTP map and $\{ \hat{M}_m \}$ be its Kraus operators. Then, 
    \begin{equation}
        ||\hat{\rho}||_2^2 - ||\mathcal{E}(\hat{\rho})||_2^2 = \frac{1}{2}\sum_{\alpha,\beta}^{}\Big|\Big|\left[ \hat{\rho}, \hat{M}_\alpha^\dag\hat{M}_\beta \right]\Big|\Big|_2^2.
    \end{equation}
\end{lemma}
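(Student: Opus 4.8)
The plan is to verify the identity by a direct expansion of both sides through the Hilbert--Schmidt inner product $\langle \hat{A},\hat{B}\rangle = \mathrm{Tr}[\hat{A}^\dagger \hat{B}]$, reorganizing the resulting traces with the cyclic property and the two defining relations of a \emph{unital} CPTP map: trace preservation $\sum_m \hat{M}_m^\dagger \hat{M}_m = \hat{I}$ and unitality $\sum_m \hat{M}_m \hat{M}_m^\dagger = \hat{I}$. Abbreviating $\hat{N}_{\alpha\beta} := \hat{M}_\alpha^\dagger \hat{M}_\beta$, I would first expand each summand on the right-hand side as
\begin{equation}
    ||[\hat{\rho},\hat{N}_{\alpha\beta}]||_2^2 = \mathrm{Tr}\big[(\hat{N}_{\alpha\beta}^\dagger \hat{\rho}^\dagger - \hat{\rho}^\dagger \hat{N}_{\alpha\beta}^\dagger)(\hat{\rho}\hat{N}_{\alpha\beta} - \hat{N}_{\alpha\beta}\hat{\rho})\big],
\end{equation}
which produces four traces: two ``diagonal'' terms and two ``cross'' terms.

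For the diagonal terms I would apply a cyclic rearrangement and then collapse the $\alpha,\beta$ sums. In $\sum_{\alpha,\beta}\mathrm{Tr}[\hat{N}_{\alpha\beta}^\dagger \hat{\rho}^\dagger \hat{\rho}\,\hat{N}_{\alpha\beta}]$ the operators assemble into $\sum_{\alpha,\beta}\hat{M}_\alpha^\dagger \hat{M}_\beta \hat{M}_\beta^\dagger \hat{M}_\alpha$; summing $\sum_\beta \hat{M}_\beta \hat{M}_\beta^\dagger = \hat{I}$ first (this is precisely where unitality enters) and then $\sum_\alpha \hat{M}_\alpha^\dagger \hat{M}_\alpha = \hat{I}$ reduces it to $\mathrm{Tr}[\hat{\rho}^\dagger \hat{\rho}] = ||\hat{\rho}||_2^2$. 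The other diagonal term reduces to $\mathrm{Tr}[\hat{\rho}\hat{\rho}^\dagger] = ||\hat{\rho}||_2^2$ by the same two identities applied in the opposite order, so together the diagonal terms contribute $2\,||\hat{\rho}||_2^2$.

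For the two cross terms I would cyclically regroup each so that the operators form $\sum_\alpha \hat{M}_\alpha \hat{\rho}^\dagger \hat{M}_\alpha^\dagger = \mathcal{E}(\hat{\rho}^\dagger)$ followed by $\sum_\beta \hat{M}_\beta \hat{\rho}\,\hat{M}_\beta^\dagger = \mathcal{E}(\hat{\rho})$, so that each equals $-\mathrm{Tr}[\mathcal{E}(\hat{\rho}^\dagger)\mathcal{E}(\hat{\rho})]$. Using the Hermiticity preservation of CPTP maps, $\mathcal{E}(\hat{\rho}^\dagger) = \mathcal{E}(\hat{\rho})^\dagger$, this is exactly $-||\mathcal{E}(\hat{\rho})||_2^2$, so the two cross terms contribute $-2\,||\mathcal{E}(\hat{\rho})||_2^2$. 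Adding the diagonal and cross contributions and dividing by $2$ yields $||\hat{\rho}||_2^2 - ||\mathcal{E}(\hat{\rho})||_2^2$, which is the claim.

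The computation is essentially bookkeeping, so the point to watch is not a hard obstacle but a conceptual one: unitality is genuinely indispensable, since the diagonal terms collapse to $||\hat{\rho}||_2^2$ only because $\sum_m \hat{M}_m \hat{M}_m^\dagger = \hat{I}$ is available \emph{in addition} to trace preservation, and the identity would fail for a generic non-unital CPTP map. The one step requiring insight rather than algebra is recognizing the cross terms as $\mathrm{Tr}[\mathcal{E}(\hat{\rho}^\dagger)\mathcal{E}(\hat{\rho})]$ and collapsing it via Hermiticity preservation; before that observation the cross terms appear as four-fold index sums with no obvious closed form.
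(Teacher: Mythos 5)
Your proof is correct and follows essentially the same route as the paper's: a direct expansion of $\sum_{\alpha,\beta}\|[\hat{\rho},\hat{M}_\alpha^\dagger\hat{M}_\beta]\|_2^2$ into four traces, collapsing the diagonal terms to $2\|\hat{\rho}\|_2^2$ via unitality and trace preservation, and recognizing the cross terms as $-2\Tr[\mathcal{E}(\hat{\rho}^\dagger)\mathcal{E}(\hat{\rho})] = -2\|\mathcal{E}(\hat{\rho})\|_2^2$. The only difference is expository (your explicit appeal to $\mathcal{E}(\hat{\rho}^\dagger)=\mathcal{E}(\hat{\rho})^\dagger$, which the paper uses implicitly), so there is nothing substantive to add.
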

\begin{proof}
    We prove this lemma by a straightforward calculation;
    \begin{equation}
        \begin{aligned}
            & \sum_{\alpha,\beta}^{}\Big|\Big|\left[ \hat{\rho}, \hat{M}_\alpha^\dagger\hat{M}_\beta \right]\Big|\Big|_2^2\\
        = & \sum_{\alpha,\beta}^{} ||\hat{\rho}\hat{M}_\alpha^\dagger\hat{M}_\beta - \hat{M}_\alpha^\dagger\hat{M}_\beta\hat{\rho}||_2^2\\
        = &  \sum_{\alpha,\beta}^{} \Tr\left( 
            (-\hat{\rho}^\dagger \hat{M}_\beta^\dagger\hat{M}_\alpha + \hat{M}_\beta^\dagger\hat{M}_\alpha\hat{\rho}^\dagger)
            (\hat{\rho}\hat{M}_\alpha^\dagger\hat{M}_\beta - \hat{M}_\alpha^\dagger\hat{M}_\beta\hat{\rho})
             \right)\\
        = &  \sum_{\alpha,\beta}^{} \Tr\left(-\hat{\rho}^\dagger \hat{M}_\beta^\dagger\hat{M}_\alpha\hat{\rho}\hat{M}_\alpha^\dagger\hat{M}_\beta
         + \hat{\rho}^\dagger \hat{M}_\beta^\dagger\hat{M}_\alpha\hat{M}_\alpha^\dagger\hat{M}_\beta\hat{\rho} \right.\\
        & \qquad\qquad \left. + \hat{M}_\beta^\dagger\hat{M}_\alpha\hat{\rho}^\dagger\hat{\rho}\hat{M}_\alpha^\dagger\hat{M}_\beta
         - \hat{M}_\beta^\dagger\hat{M}_\alpha\hat{\rho}^\dagger\hat{M}_\alpha^\dagger\hat{M}_\beta\hat{\rho}\right)\\
        = &  2\Tr(\hat{\rho}^\dagger\hat{\rho}) - 2 \sum_{\alpha,\beta}^{}\Tr(\hat{M}_\alpha\hat{\rho}^\dagger\hat{M}_\alpha^\dagger\hat{M}_\beta\hat{\rho}\hat{M}_\beta^\dagger)\\
        = &  2 \left( ||\hat{\rho}||_2^2 - ||\mathcal{E}(\hat{\rho})||_2^2 \right).
        \end{aligned}
    \end{equation}
    Here, we used the condition that $\mathcal{E}$ is unital (i.e. $\sum_{\alpha}^{}\hat{M}_\alpha\hat{M}_\alpha^\dagger = \hat{I}$) and the trace preserving property of CPTP maps (i.e. $\sum_{\alpha}^{}\hat{M}_\alpha^\dagger\hat{M}_\alpha = \hat{I}$).
\end{proof}
We use Lemma \ref{A} to prove Lemma \ref{1}. 
Since Kraus operators of $\mathcal{E}$ are positive, $\mathcal{E}$ can be written as 
\begin{equation}
    \mathcal{E}(\hat{\rho}) = \sum_{\alpha}^{}\sqrt{\hat{E}_\alpha} \hat{\rho} \sqrt{\hat{E}_\alpha},
\end{equation}
where $\left\{ \hat{E}_\alpha \right\}$ is the POVM of the bare measurements.
It is straightforward to confirm that CPTP maps of bare measurements are unital. Therefore, for any $\alpha,\beta$, we have $\left[ \hat{\rho}, \sqrt{\hat{E}_\alpha}\sqrt{\hat{E}_\beta} \right] = 0$. Especially, 
\begin{equation}
    \left[ \hat{\rho}, \hat{E}_\alpha \right] = 0
\end{equation}
for any $\alpha$.

Next, let us fix $\alpha$ and consider the basis diagonalizing $\hat{E}_\alpha$. By denoting the matrix elements of $\hat{E}_\alpha$ and $\hat{\rho}$ as $\lambda_i \delta_{ij}$ and $\rho_{ij}$ respectively, we get
\begin{equation}
    \lambda_i \rho_{ij} = \lambda_j \rho_{ij}.
\end{equation}
Since $\lambda_i \geq 0$, this is equivalent to 
\begin{equation}
    \sqrt{\lambda_i}\rho_{ij} = \sqrt{\lambda_j}\rho_{ij}.
\end{equation}
This means that $\left[ \hat{\rho},\sqrt{\hat{E}_\alpha} \right] = 0$ for any $\alpha$.
Therefore,
\begin{equation}
    \begin{aligned}
        \mathcal{E}(\hat{\rho}) & = \sum_{\alpha}^{}\sqrt{\hat{E}_\alpha}\hat{\rho}\sqrt{\hat{E}_\alpha}\\
        & = \sum_{\alpha}^{}\hat{E}_\alpha\hat{\rho} = \hat{\rho}.
    \end{aligned}
\end{equation}
Here, we used $\sum_{\alpha}^{}E_\alpha = \hat{I}$ in the last equality. Thus, Lemma \ref{1} is proved.

\subsection{Beyond the AZ$^\dag$ with a single measurement protocol}\label{beyond}
We show that in order to obtain a protocol beyond the AZ$^\dagger$ classes with a single measurement, we have to make all Kraus operators of the measurement traceless.

First, we note again that it suffices to consider the case of $\tau\to0$, and hence we consider the CPTP map of the single measurement $\mathcal{E}(\hat{\rho}) = \sum_{m}^{} \hat{M}_m \hat{\rho} \hat{M}_m^\dagger$.
The symmetries excluded in AZ$^\dagger$ are 
\begin{align}
    \mathcal{U}_P \mathcal{E} \mathcal{U}_P^{-1} = -\mathcal{E}, \label{P}\\
    \mathcal{U}_C \mathcal{E}^\top \mathcal{U}_C^{-1} = -\mathcal{E},\label{C-}\\
    \mathcal{U}_K \mathcal{E}^* \mathcal{U}_K^{-1} = -\mathcal{E},\label{K-}\\
    \mathcal{U}_Q \mathcal{E}^\dagger \mathcal{U}_Q^{-1} = -\mathcal{E}.\label{Q-}
\end{align}
Meanwhile, the trace of the super-operator $\mathcal{E}$ is given by 
\begin{equation}\label{TrE}
    \Tr \left[ \mathcal{E} \right] = \Tr \left[ \sum_{m}^{} \hat{M}_m \otimes \hat{M}_m^* \right] = \sum_{m}^{}\abs{\Tr \left[ \hat{M}_m \right]}^2,
\end{equation}
where the first equality follows from the Choi-Jamiolkowski isomorphism \cite{JAMIOLKOWSKI1972275}. Taking the trace of both sides of Eqs. (\ref{P}) - (\ref{Q-}), one gets $\Tr\left[  \mathcal{E} \right] = -\Tr \left[ \mathcal{E} \right]$ for Eqs. (\ref{P}),(\ref{C-}), and $\Tr \left[ \mathcal{E}^* \right] = -\Tr \left[ \mathcal{E} \right]$ for Eqs. (\ref{K-}),(\ref{Q-}).
Since $\Tr \left[ \mathcal{E} \right] \in \mathbb{R}$ (by Eq. (\ref{TrE})), $\mathcal{E}$ must be traceless for all symmetries in Eqs. (\ref{P}) - (\ref{Q-}). 
Using Eq. (\ref{TrE}) again, we find that in order for $\mathcal{E}$ to be traceless, all Kraus operators must be traceless. 

Notably, this condition is not achievable with bare measurements, whose Kraus operators are all positive. This provides an alternative proof of Theorem 1, but only for the case with a single measurement. 

\subsection{Eigenvalues of Successive Projective Measurements in Two-level Systems}\label{Zeno}
We showed that an eigenvalue $\lambda$ of successive bare measurements cannot reach $-1$ in the proof of Theorem \ref{main}.
On the other hand, $\lambda$ can take any negative value with $|\lambda|<1$.
We show this in the simplest setup; successive projective measurements on the two-dimensional Hilbert space $\mathcal{H}$.

In two-level systems, $\mathcal{L}(\mathcal{H})$ is spanned by an orthonormal basis $\left\{ \hat{I}, \hat{\sigma}_x, \hat{\sigma}_y, \hat{\sigma}_z \right\}$ with regard to the Hilbert-Schmidt inner product. Since an eigenmode of a CPTP map with eigenvalue $\lambda\neq1$ is always traceless, it suffices to consider the traceless subspace of $\mathcal{L}(\mathcal{H})$, which is spanned by $\left\{ \hat{\sigma}_x,\hat{\sigma}_y,\hat{\sigma}_z \right\}$.
We represent $ \hat{\rho} = \bm{v}\cdot\bm{\hat{\sigma}} \in \mathrm{span}\left\{ \hat{\sigma}_x,\hat{\sigma}_y,\hat{\sigma}_z \right\}$ as a three-dimensional vector $\bm{v}\in\mathbb{C}^3$. 
Since projective measurements always map $\hat{I}$ to $\hat{I}$, a projective measurement on $\bm{v}\cdot\bm{\hat{\sigma}}$ behaves in the same way as that on $\frac{1}{2}(\hat{I} + \bm{v}\cdot\bm{\hat{\sigma}})$ in the Bloch sphere.

Next, we consider the projective measurement with the basis $\ket{\psi_+} = \cos{\frac{\theta}{2}} \ket{0} + e^{i\phi}\sin{\frac{\theta}{2}}\ket{1}$ and $\ket{\psi_-} = \sin{\frac{\theta}{2}}\ket{0} - e^{i\phi}\cos{\frac{\theta}{2}}$, and let $\bm{v}_+ = (\sin\theta\cos\phi,\sin\theta\sin\phi,\cos\theta)$, $\bm{v}_- = -\bm{v}_+$ be the points corresponding to $\ket{\psi_\pm}$ in the Bloch sphere.
This projective measurement on $\hat{\rho}$ is represented as the orthogonal projection to the one dimensional space spanned by $\bm{v}_+$.
Therefore, if we perform  the $i$-th projective measurement with $\theta_i = \frac{\pi}{N}$ and $\phi_i=0$ for the initial vector $\hat{\rho} = \hat{\sigma}_z$, the $N$ steps of successive projective measurements map it as $\hat{\sigma}_z \mapsto -(\cos{\frac{\pi}{N}})^N \hat{\sigma_z}$. By taking the limit $N\to\infty$, the eigenvalue approaches $\lim_{N\to\infty}-(\cos{\frac{\pi}{N}})^N = -1$, and therefore we can achieve any negative eigenvalue larger than $-1$. 
In this setting, we can reach  $\xi = -1 + \mathcal{O}(N^{-1})$ with $N$ measurements. Therefore, according to the Weyl's inequality, in order to produce $\xi = -1$ with $\mathcal{O}(\epsilon)$ perturbations to each measurement, $N$ should satisfy $N\epsilon \gtrsim \frac{1}{N}$, which means that we need a large number of measurements.
We note that this result can be interpreted as an example of the quantum Zeno effect \cite{vonNeumann2018}.

\subsection{Detailed Behavior of Chiral Maxwell's Demon with Gaussian Errors}

\begin{figure}[H]
    \includegraphics[width=1\linewidth]{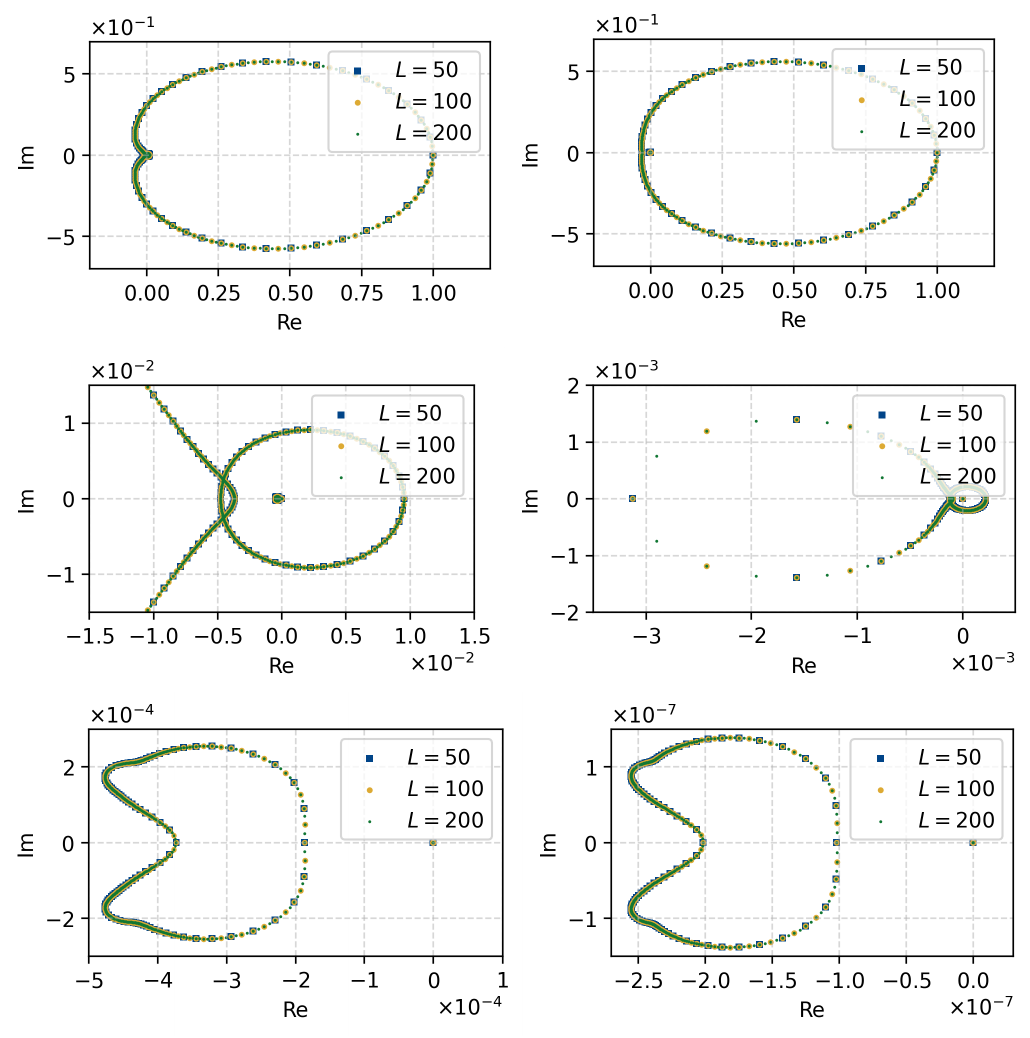}
    \caption{The PBC spectra of the CPTP maps of chiral Maxwell's demon with Gaussian errors for system size $L=50, 100, 200$. The Gaussian sharpness is taken to be $\ell=15$ $(30)$ for the left (right). The color of PBC spectrum corresponds to the wavenumber $k$ of the Bloch matrices. The unit of energy, the height of the potential and the feedback duration are set to be $J=1, V=10^4$, $\tau=1$, respectively. We set the radius of the error (cutoff) to $c=2$.}
    \label{insensitive_to_L}
\end{figure}
\begin{figure}
    \centering
    \includegraphics[width=0.98\linewidth]{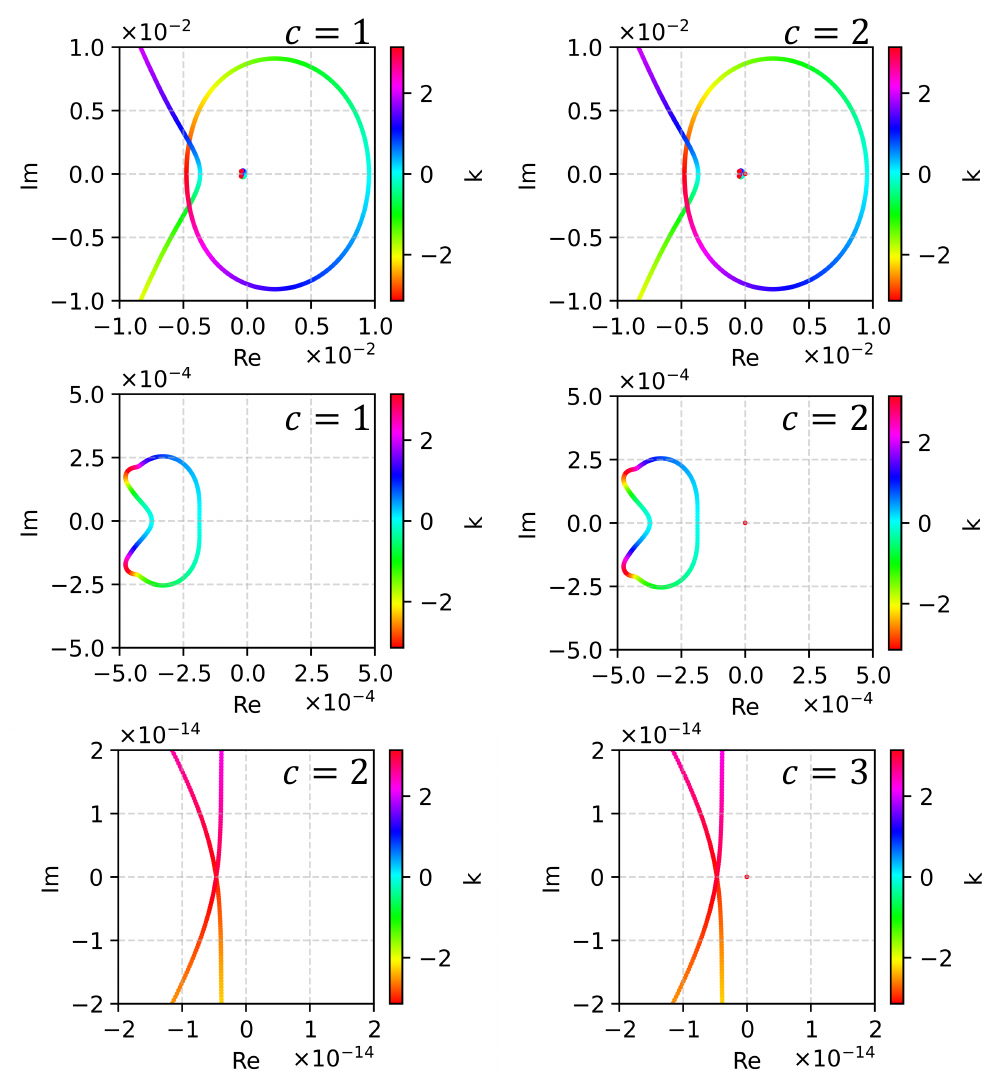}
    \caption{The dependence of PBC spectra against the Gaussian tail cutoff $c$ of the measurement error in chiral Maxwell's demon. The colored dots are the eigenvalues under PBC for $L=400$, whose colors correspond to the wavenumber $k$ of the Bloch matrices. The unit of energy, the height of the potential and the feedback duration are set to be $J=1, V=10^4$, $\tau=1$, respectively. The Gaussian sharpness is taken to be $\ell=15$.}
    \label{cutoff_dependence}
\end{figure}

\subsubsection{Insensitivity of PBC spectrum to the system size}\label{systemsize}
In Fig. \ref{insensitive_to_L}, the PBC spectra for $L=50,100,200$ are displayed. All eigenvalues for $L=200$ overlap with those for $L=100$, and all eigenvalues for $L=100$ overlap with those for $L=50$. This serves as numerical evidence that the spectrum is not sensitive to the system size $L$, justifying the speculation that the spectrum for $L=\infty$ passes all eigenvalues for finite $L$.

\subsubsection{Behavior against Gaussian tail cutoff}\label{behavior_c}

The spectral change with respect to the Gaussian tail cutoff $c$ of the measurement error is focused on in Fig. \ref{cutoff_dependence}. Except for the difference between the outermost band for $c=0$ and $c=1$, the increase in $c$ just adds some bands inside the formerly innermost band and does not drastically change the outer bands. This observation leads to the conjecture that, in the limit of $c\to\infty$, $\xi=0$ is the only accumulation point of the spectrum of chiral Maxwell's demon with Gaussian errors, allowing the winding number around $1-\epsilon$ to converge in the limit of continuous measurement and feedback.

\nocite{wen_2025_17276238}

\end{document}